\documentclass[letterpaper]{article} 
\usepackage{aaai2026}  
\usepackage{times}  
\usepackage{helvet}  
\usepackage{courier}  
\usepackage[hyphens]{url}  
\usepackage{graphicx} 
\usepackage{natbib}  
\usepackage{caption} 
\newcommand{\los}{\mathcal{L}}
\newcommand{\watchers}{\mathcal{W}}
\newcommand{\unseen}{\mathcal{U}}
\newcommand{\freecells}{\mathcal{C}}
\newcommand{\nodeunseen}{R}

\newcommand{\pathunion}{\mathcal{P}}
\newcommand{\neighbors}{\mathcal{N}}
\newcommand{\starts}{\mathcal{S}}
\newcommand{\paths}{\boldsymbol{\pi}}
\newcommand{\astar}{A\textsuperscript{*}\xspace}
\newcommand{\algname}{MWRP-CP\textsuperscript{3}\xspace}
\newcommand{\objective}{\mathcal{O}}
\newcommand{\subalgname}{MxW\astar}
\newcommand{\anytimesubalgname}{A\subalgname}

\usepackage{layouts}
\usepackage{amsthm}
\usepackage{amsmath}
\usepackage{bm}
\usepackage[english]{babel}
\usepackage[algo2e,ruled,vlined,linesnumbered,noend]{algorithm2e} 
\usepackage{graphicx}
\usepackage{multirow}
\usepackage{makecell}
\usepackage{array}
\usepackage{subcaption}   
\newtheorem{theorem}{Theorem}

\SetAlgoNlRelativeSize{-1}   
\SetNlSty{}{}{:}

\usepackage{tikz}
\usetikzlibrary{positioning, arrows.meta} 

\usepackage{algorithm}
\usepackage[noend]{algorithmic}

\usepackage{newfloat}
\usepackage{listings}
\DeclareCaptionStyle{ruled}{labelfont=normalfont,labelsep=colon,strut=off} 
\floatstyle{ruled}
\newfloat{listing}{tb}{lst}{}
\floatname{listing}{Listing}
\title{Scalable Algorithms with Provable Optimality Bounds for the Multiple Watchman Route Problem}
\author {
    Srikar Gouru\textsuperscript{\rm 1},
    Ariel Felner\textsuperscript{\rm 2},
    Jiaoyang Li\textsuperscript{\rm 1}
}
\affiliations {
    \textsuperscript{\rm 1}Carnegie Mellon University\\
    \textsuperscript{\rm 2}Ben-Gurion University of the Negev\\
    srikargouru@cmu.edu, felner@bgu.ac.il, jiaoyangli@cmu.edu
}

\begin{document}

\maketitle

\begin{abstract}
In this paper, we tackle the Multiple Watchman Route Problem (MWRP), which aims to find a set of paths that $M$ watchmen can follow such that every location on the map can be seen by at least one watchman. First, we propose multiple methods to reduce the state space over which a search needs to be conducted by pruning map areas that are guaranteed to be seen en route to other areas. Next, we introduce \algname, an efficient optimal planner that combines these methods with techniques that improve the quality and calculation time of existing heuristics. We present several suboptimal algorithms with bounds on solution quality, including \subalgname, a general variant of weighted \astar for makespan problems. We also present anytime variations of our suboptimal algorithms, as well as techniques to improve an existing suboptimal solution by solving multiple decomposed sub-problems. We show that \algname can reduce the search space by more than 95\% and runs more than 200$\times$ faster than existing optimal algorithms on 2D grid maps. We also show that our suboptimal algorithms solve maps 3$\times$ larger than those solvable by \algname. See \verb|mwrp-cp3.github.io| for the open source codebase and video demonstrations.
\end{abstract}


\section{Introduction}
\label{sec:introduction}
Coverage Path Planning (CPP) is the task of computing a path that passes over every point in a given environment. CPP plays a vital role in emergency scenarios such as searching for survivors in disaster relief \cite{cho2021coverage} and locating active wildfires \cite{bezas2022coverage}. Extensive research has been conducted for exploration in \textit{unknown} environments including frontier-based \cite{yamauchi1997frontier} and sampling-based \cite{steinbrink2021rapidly} algorithms. For CPP in \textit{known} environments, methods such as cellular decomposition \cite{huang2001optimal} and genetic algorithms \cite{jimenez2007optimal} have been studied to compute optimal paths. However, these methods are designed for robots with a pre-determined coverage radius, such as lawnmowers and vacuums, rather than robots whose coverage may instead depend on visibility or line-of-sight.

Specifically, we focus on the \textit{Watchman Route Problem} (WRP) which requires a traveling agent, the \textit{watchman}, to find the shortest path such that every point on the environment can be seen from at least one location on the path. WRP was first introduced by Chin and Ntafos, who presented an efficient solution for rectilinear polygons and proved that the problem was NP-hard in polygons with holes \cite{chin1986optimum}. Since then, many problem variations have been introduced. The $k$-Watchman Route Problem ($k$-WRP) \cite{packer2008computing} involves $k$ traveling watchmen whose combined vision must encompass the entire environment, with the goal of minimizing either the longest watchman's path (\textit{min-max / makespan}) or the total path length (\textit{min-sum}). Computing optimal paths for an arbitrary $k$ watchmen was proven to be NP-hard for both the makespan \cite{mitchell1991watchman} and min-sum \cite{nilsson1996guarding} objectives in simple polygons. Moreover, WRP can be either \textit{anchored}, in which the watchmen have predetermined starting locations, or \textit{floating}, in which the watchman paths can start anywhere.

Recently, a discrete variation of single-watchman WRP was introduced for arbitrary graphs, with a focus on 2D grid maps. Unlike the classic WRP and $k$-WRP problems, graph-based WRP works on an arbitrary \textit{Line-Of-Sight} (LOS) function that determines which cells are visible from a given cell \cite{seiref2020solving}. Additionally, graph-based WRP makes no prior assumptions on the environment geometry and the watchman's LOS, allowing for complex problem instances to be modeled. The problem was proven to be NP-hard to solve optimally. Suboptimal algorithms have also been proposed \cite{yaffe2021suboptimally}.

Graph-based WRP has since been extended to the graph-based \textit{Multiple Watchman Route Problem} (MWRP), in which every cell must be visible from at least one cell along at least one of the watchman paths \cite{livne2023optimally}. A new variant of \astar called MWRP-\astar was introduced, which performs a joint-space heuristic search to compute an optimal set of paths for both the min-sum and makespan objectives. They were able to solve grids of up to 200 free cells and up to 6 agents in several seconds. However, MWRP-\astar is infeasible for real-world scenarios where maps consist of thousands of cells. We focus on the makespan objective since most exploration problems are time-critical due to some emergency (i.e., disaster relief, firefighting). Our contributions are summarized as follows:

\begin{enumerate}
    \item \textbf{Optimal Search Algorithm:} We present \algname, which consists of methods for reducing the state space of the algorithm by determining that some cells in the map are strictly harder to see than others. Additionally, \algname includes algorithmic enhancements that significantly improve heuristic quality and calculation time.
    \item \textbf{Bounded Suboptimal Search (BSS):} We introduce \textit{Minimax Weighted \astar} (\subalgname), a general variant of weighted \astar \cite{pohl1970heuristic} that efficiently computes bounded-quality solutions for makespan problems. Additionally, we introduce two heuristics that are used in conjunction with Focal Search \cite{pearl1982studies} to solve the MWRP problem with makespan. We also formulate anytime variations for both algorithms.
    \item \textbf{Postprocessing Framework:} We present a framework that efficiently improves the makespan of an existing suboptimal solution by partitioning the map into and solving fast, decomposed sub-problems for each agent.
\end{enumerate}

We show that \algname can reduce the search space by more than 95\% on complex maps and computes optimal paths more than 200$\times$ faster than existing baselines. We also show that our suboptimal algorithms scale in terms of map size (1,500+ grid cells) and agent count (5+ agents).

\section{Background}\label{sec:background}
The WRP and MWRP problems can be represented as any graph structure \cite{seiref2020solving, livne2023optimally}, and all of our techniques and algorithms are generalizable to arbitrary graphs. For simplicity, we formulate the problem as a 2D grid graph composed of free and obstacle cells, a common approach in coverage-related problems.

\subsection{Problem Formulation}

The MWRP problem is defined by the tuple $(M, \freecells, \unseen, \starts, \neighbors, \los, \objective)$, where $M$ is the number of agents, $\freecells$ is the set of free cells on the grid that the agents can be at, $\unseen \subseteq \freecells$ is the set of cells that need to be seen by at least one of the agents, and $\starts = \{\starts_1, ..., \starts_M\}$, where $\starts_k \in \freecells$ is the starting location of agent $a_k$. In general, $\unseen = \freecells$, however we separate these terms because we introduce certain subproblems where not every cell in $\freecells$ needs to be explicitly seen. The neighbor function, $\neighbors$, defines the cells that an agent can move to in a single action from a given cell, and the LOS function, $\los$, defines all cells visible to an agent from a given cell. Figure~\ref{fig:example_problem} (left) visualizes \textit{four-way movement}, an example neighbor function, and \textit{Bresenham LOS}, an example LOS function commonly used in computer graphics that simulates a continuous field-of-view in discrete maps \cite{5388473}. We define the \textit{watchers} function, $\watchers$, as the inverse of the LOS function, such that $s_2 \in \los(s_1) \leftrightarrow s_1 \in \watchers(s_2) \, \forall \, s_1,s_2 \in \freecells$. For many LOS functions, including Bresenham LOS, $\watchers$ may not be symmetric, meaning there is no guarantee that $\watchers(s) = \los(s)$. We define $d(s_1, s_2)$ as the length of the shortest traversable path between $s_1$ and $s_2$.

A solution to the MWRP problem must be a set of paths $\paths = \{\pi_1, ..., \pi_M\}$, where each path $\pi_k$ is a list of cells that agent $a_k$ traverses starting at $\starts_k$ and following the movement constraints defined by $\neighbors$. A set of paths, $\paths$, is a solution to the problem if every cell in $\unseen$ is seen during at least one agent's path. That is, defining the function $\pathunion(\paths) = \bigcup \, \pi_k$, $\paths$ is a solution if $\watchers(s) \, \cap \pathunion(\paths) \neq \emptyset$ for every cell $s \in \unseen$. The cost of a path is its length, defined as $c(\pi_k) = |\pi_k|$. $\objective$ is the \textit{makespan} objective function, defined as $\objective(\paths) = \max\limits_{\pi_k \in \paths}c(\pi_k)$.

\begin{figure}[t]
    \centering

    \includegraphics[width=\linewidth]{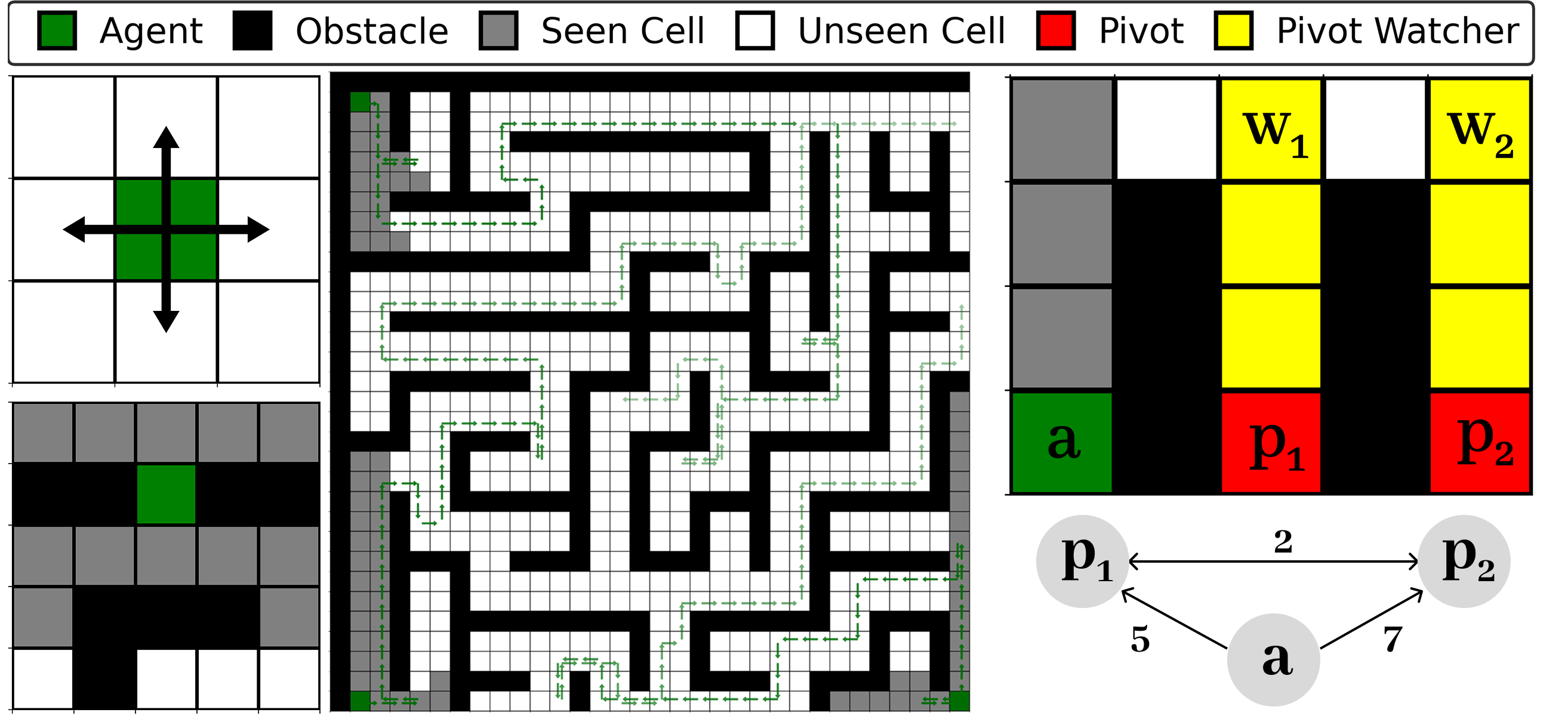}

    \caption{Examples of four-way movement (top left), Bresenham LOS (bottom left), an MWRP problem with the optimal makespan solution (middle), an example scenario for the mTSP heuristic (top right), and its corresponding $G_\mathit{DLS}$ (bottom right). The legend applies to all figures in this paper.}
    \label{fig:example_problem}
\end{figure}

Because exploration problems generally require few agents searching over a large area, we assume that each cell $s \in \freecells$ can hold all $M$ agents, and thus we do not consider agent-agent collisions for our solution. This assumption was also used in previous work \cite{livne2023optimally}. The problem is solved once offline, and we assume perfect information regarding the map structure and agent starting locations. Figure~\ref{fig:example_problem} (middle) shows an example problem and a makespan-optimal solution on a 32$\times$32 Maze map with 3 agents.

\subsection{MWRP-A* Algorithm}

\label{subsec: mwrp_algorithm}

In this section we summarize the MWRP-\astar algorithm by Livne et al., a variation of joint-space \astar that optimally solves the MWRP problem \cite{livne2023optimally}. \textbf{Node}: Each node in the search tree consists of a list of tuples, $(s_k, c_k)$, representing the current location and the cumulative cost for agent $a_k$. Additionally, each node keeps track of its \textit{residual set} $\nodeunseen \subseteq \unseen$, which is the set of residual (or remaining) cells that need to be seen. The cost of the node is $\max\limits_k\{c_k\}$. \textbf{Root node}: The tuples are $(\starts_k, 0)$ for each agent $a_k$, and the residual set, $\nodeunseen = \unseen \ \setminus \bigcup\limits \los(S_k)$, represents the residual cells not visible from any agent's starting location. \textbf{Goal node:} Any node where $\nodeunseen = \emptyset$ is considered a goal node.

\subsubsection{Successors}

Successor nodes are generated using a joint-space expansion. To avoid generating nodes where agents do not see any new cells, MWRP-\astar utilizes the \textit{Expanding Borders} mechanism, which finds nearby cells for each agent that have LOS to a previously unseen residual cell. These new cells, along with the cumulative cost to reach them, are the successor states of the agent. Each agent also has the option to \textit{terminate}, which removes it from the set of agents in the successor node. The successor nodes are the Cartesian product of the successor states for each agent. In each successor node, $\nodeunseen$ is computed by copying $\nodeunseen$ from the parent node and removing cells that the agents have LOS to from their new locations.

\subsubsection{Node Dominance}

To avoid unnecessary re-expansions, \textit{dominated} nodes are pruned out during the search. Node $n$ \textit{dominates} node $n'$ if (1) all of their agent locations are equal ($n.s_k = n'.s_k$), (2) the residual set for node $n$ is a subset of the residual set for node $n'$ ($n.\nodeunseen \subseteq n'.\nodeunseen$), and (3) the cumulative cost for each agent in node $n$ is less than or equal to the cumulative cost in node $n'$ ($n.c_k \leq n'.c_k$).

\subsubsection{Lazy Heuristic Evaluation}

Nodes are initially inserted into the \textit{OPEN} list with their $f$-value calculated using the \textit{Singleton heuristic}. If a node is popped from \textit{OPEN} for the first time, a more informed $f$-value is calculated using the \textit{multiple Traveling Salesman Problem (mTSP) heuristic}, and the node is re-inserted into \textit{OPEN} with its new $f$-value. When a node is popped from \textit{OPEN} for the second time, it is expanded and successors are generated.

\subsubsection{Singleton Heuristic}

The Singleton heuristic defines $f_s = \min\limits_{k} \{c_k + w_{s,k}\}$ as the lowest agent cost at which cell $s$ can be seen, where $w_{s,k}$ is the minimum cost for agent $a_k$ to reach a watcher of $s$. Since every cell in $\nodeunseen$ must be seen, the admissible Singleton $f$-value is $\max\{f_s\}$ among all $s \in \nodeunseen$.

\subsubsection{mTSP Heuristic}

First, a set of \textit{pivots} is selected from $\nodeunseen$ such that no two pivots share any watchers. Next, a graph $G_\mathit{DLS}$ is constructed, where each vertex $v_p$ in $G_\mathit{DLS}$ corresponds to pivot $p$, and each vertex $v_a$ corresponds to agent $a$. Since any solution requires all pivots to be seen by at least one agent, the heuristic is computed by solving mTSP on the graph $G_\mathit{DLS}$, where each vertex $v_p$ must be reached by at least one agent, with the agents starting at $v_a$. For the heuristic to be admissible, edges in $G_\mathit{DLS}$ are calculated with underestimated costs. Edge costs from agent vertices $v_a$ to pivot vertices $v_p$ are calculated as the minimum distance between agent $a$ and any watcher of pivot $w \in \watchers(p)$. Edge costs from pivot vertices $v_{p_i}$ to other pivot vertices $v_{p_j}$ are calculated as the minimum distance between any watcher $w_i \in \watchers(p_i)$ and any watcher $w_j \in \watchers(p_j)$, as a lower bound on the travel distance to go from seeing pivot $p_i$ to pivot $p_j$.

Figure~\ref{fig:example_problem} (top right) shows an example search node with a single agent $a$, pivots $p_1$ and $p_2$, and the corresponding $G_\mathit{DLS}$ (bottom right). The closest distance from agent $a$ to a watcher of $p_1$ (labeled $w_1$) is 5, and the closest distance from agent $a$ to a watcher of $p_2$ (labeled $w_2$) is 7. Additionally, the closest distance from any watcher of $p_1$ to any watcher of $p_2$ is 2 (the distance from $w_1$ to $w_2$), thus resulting in the $G_\mathit{DLS}$ graph shown in Figure~\ref{fig:example_problem} (bottom right).

Defining $h_k$ as the path length of each agent $a_k$ on $G_\mathit{DLS}$, solving mTSP to minimize $\min\{c_k + h_k\}$ \cite{gavishmtsp} results in an admissible $f$-value. However, this $f$-value is not consistent, so we utilize the \textit{pathmax} technique to ensure consistency by setting $f(n) = \max\{f(n), f(n_p)\}$, where $n_p$ is node $n$'s parent during the \astar search \cite{mero1984heuristic}.

\section{\algname}
\label{sec:algorithmic_improvements}
This section describes our optimal search algorithm, \algname, which is MWRP-\astar with Cell and Path dominance, Pivot pruning, and Parallel heuristic calculation. We first introduce cell and path dominance, two methods for reducing the algorithm's state space. Then, we introduce pivot pruning and parallel heuristic calculation, two enhancements that directly speed up the original MWRP-\astar algorithm.

\subsection{State Space Reduction}

Each node in the search space is uniquely defined by the location of each agent, which is a cell in $\freecells$, and the residual set $\nodeunseen$, which is a subset of $\unseen$. Thus, the state space is $O(|\freecells|^M2^{|\unseen|})$. We introduce \textit{cell} and \textit{path dominance}, two methods to reduce $\unseen$ by identifying cells that can be ignored during the search, a process that we call \textit{cell pruning}. We prove that both methods do not alter the set of possible solutions, thus preserving the algorithm's optimality guarantees.

\subsubsection{Cell Dominance}

\captionsetup[subfigure]{skip=0pt}   

\begin{figure}[t]
    \centering

    \begin{subfigure}{0.32\linewidth}
        \centering
        \includegraphics[width=\linewidth]{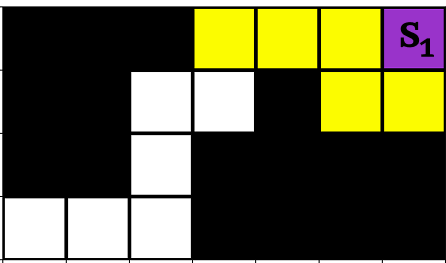}
        \caption{}
    \end{subfigure}
    \hfill
    \begin{subfigure}{0.32\linewidth}
        \centering
        \includegraphics[width=\linewidth]{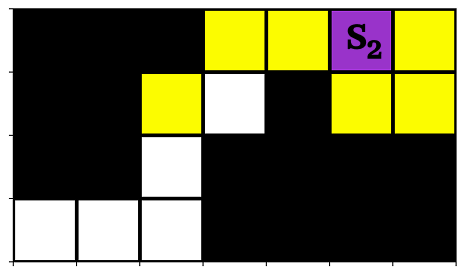}
        \caption{}
    \end{subfigure}
    \hfill
    \begin{subfigure}{0.32\linewidth}
        \centering
        \includegraphics[width=\linewidth]{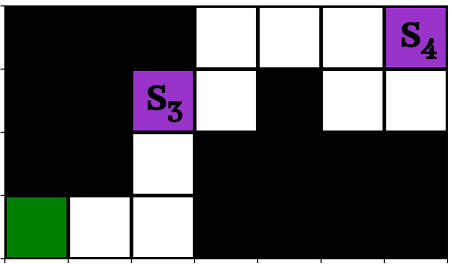}
        \caption{}
    \end{subfigure}

    \caption{Figures(a) and (b) show the watchers (yellow) of cells $s_1$ and $s_2$ (purple), respectively. The cells are also watchers of themselves. Figure(c) shows an example with an agent (green) where $s_3$ is along the agent's path to $s_4$.}
    \label{fig:cell_domination_example}
\end{figure}

The first method, \textit{cell dominance} (CD), stems from the observation that some cells are guaranteed to be seen when an agent sees other cells, no matter where the agent is. Intuitively, looking at the example shown in Figure~\ref{fig:cell_domination_example}(a) and (b), it is impossible to see $s_1$ without also seeing $s_2$. Formally, this is because the watchers of $s_1$ are a subset of the watchers of $s_2$. Thus, no matter what watcher of $s_1$ an agent sees the cell from, it simultaneously sees $s_2$. We refer to this property as cell dominance, specifically that $s_1$ \textit{dominates} $s_2$. During our search, we ignore $s_2$, since seeing $s_1$ must also result in seeing $s_2$. Theorem~\ref{thm: cell_dominance} proves that cell dominance does not sacrifice optimality.

\begin{theorem}
\label{thm: cell_dominance}
Given a cell $s_j \in \unseen$, if there exists another cell $s_i \in \unseen$ such that $\watchers (s_i) \subseteq \watchers (s_j)$, then any solution $\paths$ that sees $\unseen \setminus \{s_j\}$ also sees $\unseen$.
\end{theorem}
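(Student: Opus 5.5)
The plan is a direct unfolding of the definition of ``seeing'' given in the Problem Formulation, followed by a single set-inclusion step. Recall that $\paths$ sees a cell $s$ exactly when $\watchers(s) \cap \pathunion(\paths) \neq \emptyset$. So, to show $\paths$ sees $\unseen$, we must show that this intersection is nonempty for every $s \in \unseen$.

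First, fix a solution $\paths$ that sees $\unseen \setminus \{s_j\}$. For every $s \in \unseen$ with $s \neq s_j$, the required intersection is nonempty by hypothesis. The only cell left to check is $s_j$ itself.

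Second, handle $s_j$ through the dominating cell $s_i$. We need $s_i \neq s_j$ so that $s_i \in \unseen \setminus \{s_j\}$ and the hypothesis applies to it. This is implicit in the statement's phrase ``another cell'', and I would state it explicitly. Then $\watchers(s_i) \cap \pathunion(\paths) \neq \emptyset$, so we may pick a cell $w$ in this intersection. Since $\watchers(s_i) \subseteq \watchers(s_j)$, we have $w \in \watchers(s_j) \cap \pathunion(\paths)$. Hence $\paths$ sees $s_j$, and therefore sees all of $\unseen$.

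There is no real obstacle here. The only points needing care are:
\begin{itemize}
\item the distinctness $s_i \neq s_j$; if $s_i = s_j$ were allowed, the statement could fail;
\item using the watcher-based definition of seeing, rather than LOS from path cells, which sidesteps the possible asymmetry of $\watchers$ and $\los$ noted earlier.
\end{itemize}
I might also remark on the converse: any solution that sees $\unseen$ trivially sees $\unseen \setminus \{s_j\}$. So the two solution sets coincide, which is what justifies pruning $s_j$ without losing optimality.
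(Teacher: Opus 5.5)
Your proof is correct and is essentially the paper's argument: since $s_i \in \unseen \setminus \{s_j\}$, some cell of $\pathunion(\paths)$ lies in $\watchers(s_i)$, and the inclusion $\watchers(s_i) \subseteq \watchers(s_j)$ makes that cell a watcher of $s_j$. Stating $s_i \neq s_j$ explicitly is a worthwhile clarification, because the theorem would be false without it.
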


\begin{proof}
Let $\paths$ be any solution that sees every cell in $\unseen \, \setminus \, \{s_j\}$. Since $s_i \in \unseen \setminus \{s_j\}$, the solution must include a watcher of $s_i$, so there exists a cell $k \in  \pathunion(\paths) \, \cap \, \watchers(s_i)$. Since $\watchers(s_i) \subseteq \watchers(s_j)$, it must be that $k \in \watchers(s_j)$, thus $\paths$ sees $s_j$.
\end{proof}

\begin{algorithm}[t]
\caption{Cell Dominance (CD)}
\label{alg:cell_dominance}
\SetKw{Continue}{continue}
\KwIn{$\watchers, \unseen$}
\KwOut{Updated $\unseen$ with dominated cells pruned}
\begin{algorithmic}[1] 
\FOR{$s_i \in \unseen$}\label{cd: first_for_loop}
\FOR{$s_j \in \unseen \setminus \{s_i\}$} \label{cd: second_for_loop}
\STATE \lIf{$\watchers (s_i) \subseteq \watchers(s_j)$} { \label{cd: watcher_comparison}
            $\unseen = \unseen \setminus \{s_j\}$}

\ENDFOR
\ENDFOR
\STATE \textbf{return} $\unseen$;
\end{algorithmic}
\end{algorithm}

In Algorithm~\ref{alg:cell_dominance}, we loop through every pair of unseen cells (lines \ref{cd: first_for_loop}-\ref{cd: second_for_loop}) and prune one of them if it is dominated by the other (line \ref{cd: watcher_comparison}). Since each cell has at most $|\freecells|$ watchers, the runtime complexity of CD is $O(|\freecells||\unseen|^2)$. CD is executed once per problem instance to compute the updated $\unseen$ prior to running the \astar algorithm discussed in Section~\ref{subsec: mwrp_algorithm}.

\subsubsection{Path Dominance}

The second method, \textit{path dominance} (PD), stems from the idea that given the starting locations of the agents, the agents are guaranteed to see some cells on the way to seeing other cells. In the example shown in Figure~\ref{fig:cell_domination_example}(c), $s_3$ lies on the agent's path to seeing $s_4$, and it is impossible to see $s_4$ without seeing $s_3$ first. Thus, for any pair of distinct cells $s_i$ and $s_j$ in $\unseen$, if all the paths to seeing $s_i$ require the agent to also see $s_j$, then we can safely ignore $s_j$ during our search, since any solution that sees $s_i$ is guaranteed to also see $s_j$. We call this behavior \textit{path dominance}, specifically that $s_i$ \textit{dominates} $s_j$. For a cell $s_j \in \unseen$, we determine if another cell $s_i \in \unseen$ dominates it by performing a multi-source breadth-first search (BFS), starting at $\starts$, to find all cells $\mathbf{\Pi}$ that can be reached by at least one agent without seeing $s_j$. If there is some cell $s_i$ such that no watcher of $s_i$ is in $\mathbf{\Pi}$, then $s_i$ \textit{dominates} $s_j$. Theorem~\ref{thm: path_dominance} proves that path dominance does not sacrifice optimality.

\begin{theorem}
\label{thm: path_dominance}
Given a cell $s_j \in \unseen$, we define $\Pi$ as the set of all paths that originate at the starting location of an agent and do not see $s_j$ from any cell along the path. We also define $\mathbf{\Pi} = \bigcup\limits_{\pi_i \in \Pi}\pi_i$ as the set of cells that is in at least one path in $\Pi$. If there exists a cell $s_i \neq s_j$ such that $\watchers(s_i) \, \cap \, \mathbf{\Pi} = \emptyset$, then any solution $\paths$ that sees $\unseen \setminus \{s_j\}$ also sees $\unseen$.
\end{theorem}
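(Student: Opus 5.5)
We argue by contradiction, in the same spirit as the proof of Theorem~\ref{thm: cell_dominance}. Let $\paths = \{\pi_1, \dots, \pi_M\}$ be any solution that sees every cell of $\unseen \setminus \{s_j\}$, and suppose for contradiction that $\paths$ does not see $s_j$. Then no cell of any $\pi_k$ lies in $\watchers(s_j)$.

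The key step is to show that every path $\pi_k$ belongs to $\Pi$. By the problem formulation, each $\pi_k$ starts at $\starts_k$, which is the starting location of an agent, and follows the movement constraints $\neighbors$. Since $\pi_k \cap \watchers(s_j) = \emptyset$ under our assumption, no cell along $\pi_k$ sees $s_j$. These are exactly the defining conditions of $\Pi$, so $\pi_k \in \Pi$. Hence every cell of $\pi_k$ lies in $\mathbf{\Pi}$, and therefore $\pathunion(\paths) \subseteq \mathbf{\Pi}$.

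Next we use $s_i$. Since $s_i \neq s_j$, the theorem should also take $s_i \in \unseen$, as in the dominance check described before the statement; we adopt this reading. Then $s_i \in \unseen \setminus \{s_j\}$, so $\paths$ sees $s_i$, and there is a cell $k \in \pathunion(\paths) \cap \watchers(s_i)$. Combining this with $\pathunion(\paths) \subseteq \mathbf{\Pi}$ gives $k \in \watchers(s_i) \cap \mathbf{\Pi}$, which contradicts the hypothesis $\watchers(s_i) \cap \mathbf{\Pi} = \emptyset$. Thus $\paths$ sees $s_j$, and so it sees all of $\unseen$.

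The only delicate point is making sure the paths in $\paths$ match the definition of $\Pi$ exactly. Paths in $\Pi$ are ``paths that originate at the starting location of an agent''; we read this as valid $\neighbors$-paths starting at some $\starts_k$, with no assumption about which agent follows them. Under this reading $\pi_k \in \Pi$ holds directly. We also note that "does not see $s_j$ from any cell along the path" includes the start cell itself, which our assumption covers because every cell of $\pi_k$, including $\starts_k$, avoids $\watchers(s_j)$. The remaining point is the implicit condition $s_i \in \unseen$ identified above.
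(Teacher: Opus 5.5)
Your proof is correct and is essentially the paper's argument. The paper argues directly that the path $\pi_a$ containing a watcher $k \in \watchers(s_i)$ has $k \notin \mathbf{\Pi}$, so $\pi_a \notin \Pi$ and therefore $\pi_a$ sees $s_j$; this is the contrapositive of your step that every path avoiding $\watchers(s_j)$ lies in $\Pi$, and your note that $s_i \in \unseen$ must be assumed matches what the paper uses implicitly.
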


\begin{proof}
Let $\paths$ be any solution that sees every cell in $\unseen \setminus \{s_j\}$. Since $s_i \in \unseen \setminus \{s_j\}$, there exists a watcher $k \in \watchers(s_i)$ that was visited by at least one agent, so $k \in \pi_a$ for some $\pi_a \in \paths$. Since $\watchers(s_i) \, \cap \, \mathbf{\Pi} = \emptyset$, we know $k \notin \mathbf{\Pi}$. This means that $\pi_a \notin \Pi$, so $\pi_a$ must have seen $s_j$ along its path.
\end{proof}

\begin{algorithm}[t]
\caption{subgraphBFS}
\label{alg:subgraphbfs}
\SetKw{Break}{break}
\SetKw{Goto}{goto}
\SetKw{Continue}{continue}
\KwIn{$\starts, \neighbors, \freecells'$}
\KwOut{$\mathbf{\Pi}$ containing all reachable cells within $\freecells'$}    

\begin{algorithmic}[1] 
\STATE $Q = \starts$;\;

\STATE $\mathbf{\Pi} = \emptyset$;\;

\WHILE{$Q \neq \emptyset$}
    \STATE $s = Q.pop()$;\;

    \STATE \lIf{$s \in \mathbf{\Pi}$ \textbf{or} $s \notin \freecells'$} {\label{bfs: freecells_check}\Continue}

    \STATE $\mathbf{\Pi} = \mathbf{\Pi} \cup s$;\;

    \STATE \lFor{$n \in \neighbors(s)$} {$Q = Q \cup n$}        

\ENDWHILE
\STATE \textbf{return} $\mathbf{\Pi}$;
\end{algorithmic}
\end{algorithm}

\begin{algorithm}[t]
\setcounter{AlgoLine}{0}
\caption{Path Dominance (PD)}
\label{alg:path_dominance}
\SetKw{Break}{break}
\SetKw{Goto}{goto}
\KwIn{$\watchers, \neighbors, \unseen, \starts$}
\KwOut{Updated $\unseen$ with dominated cells pruned}
\begin{algorithmic}[1] 

    
    
            

\FOR{$s_j \in \unseen$} \label{pd: first_loop}
    \STATE $\freecells' = \freecells \setminus \watchers(s_j)$;\; \label{pd: subset_creation}
    
    \STATE $\mathbf{\Pi} =$ \texttt{subgraphBFS(}$\starts, \neighbors, \freecells'$\texttt{)};\; \label{pd: bfs_call}
    
    \FOR{$s_i \in \unseen \setminus \{s_j\}$} \label{pd: second_loop}
        \IF{$\watchers(s_i) \cap \mathbf{\Pi} = \emptyset$} \label{pd: intersection_check}
            \STATE $\unseen = \unseen \setminus \{s_j\}$;\; \label{pd: unseen_pruning} 
            
            \STATE \Break;\;
        \ENDIF
    \ENDFOR
\ENDFOR
\STATE \textbf{return} $\unseen$;
\end{algorithmic}
\end{algorithm}

First, we introduce \texttt{subgraphBFS} (Algorithm~\ref{alg:subgraphbfs}), a modified floodfill algorithm that only expands cells within a given cell set $\freecells'$ (line \ref{bfs: freecells_check}). The method returns all cells in $\freecells'$ reachable from $\starts$ and has a runtime complexity of $O(|\freecells'|)$.

In Algorithm~\ref{alg:path_dominance} we describe the PD algorithm, in which we loop through all cells $s_j \in \unseen$ to determine if $s_j$ is path dominated by any other cell $s_i$. First, we compute $\freecells'$, the subset of $\freecells$ with all watchers of $s_j$ removed (line \ref{pd: subset_creation}). Then we call \texttt{subgraphBFS} on $\freecells'$ to compute $\mathbf{\Pi}$, the set of cells that an agent can reach while avoiding LOS to $s_j$ (line \ref{pd: bfs_call}). If there is a cell $s_i$ that has no watchers in $\mathbf{\Pi}$, then $s_i$ dominates $s_j$, so $s_j$ is pruned (lines \ref{pd: second_loop}-\ref{pd: unseen_pruning}). The runtime complexity for this check is $O(|\freecells||\unseen|)$, which dominates the $O(|\freecells|)$ runtime of the \textit{subgraphBFS} algorithm. Thus, like CD, the runtime complexity of the PD algorithm is $O(|\freecells||\unseen|^2)$. PD is run once per problem instance to compute the updated $\unseen$ prior to running the \astar algorithm discussed in Section~\ref{subsec: mwrp_algorithm}.

\begin{figure}[t]
    \centering

    \includegraphics[width=\linewidth]{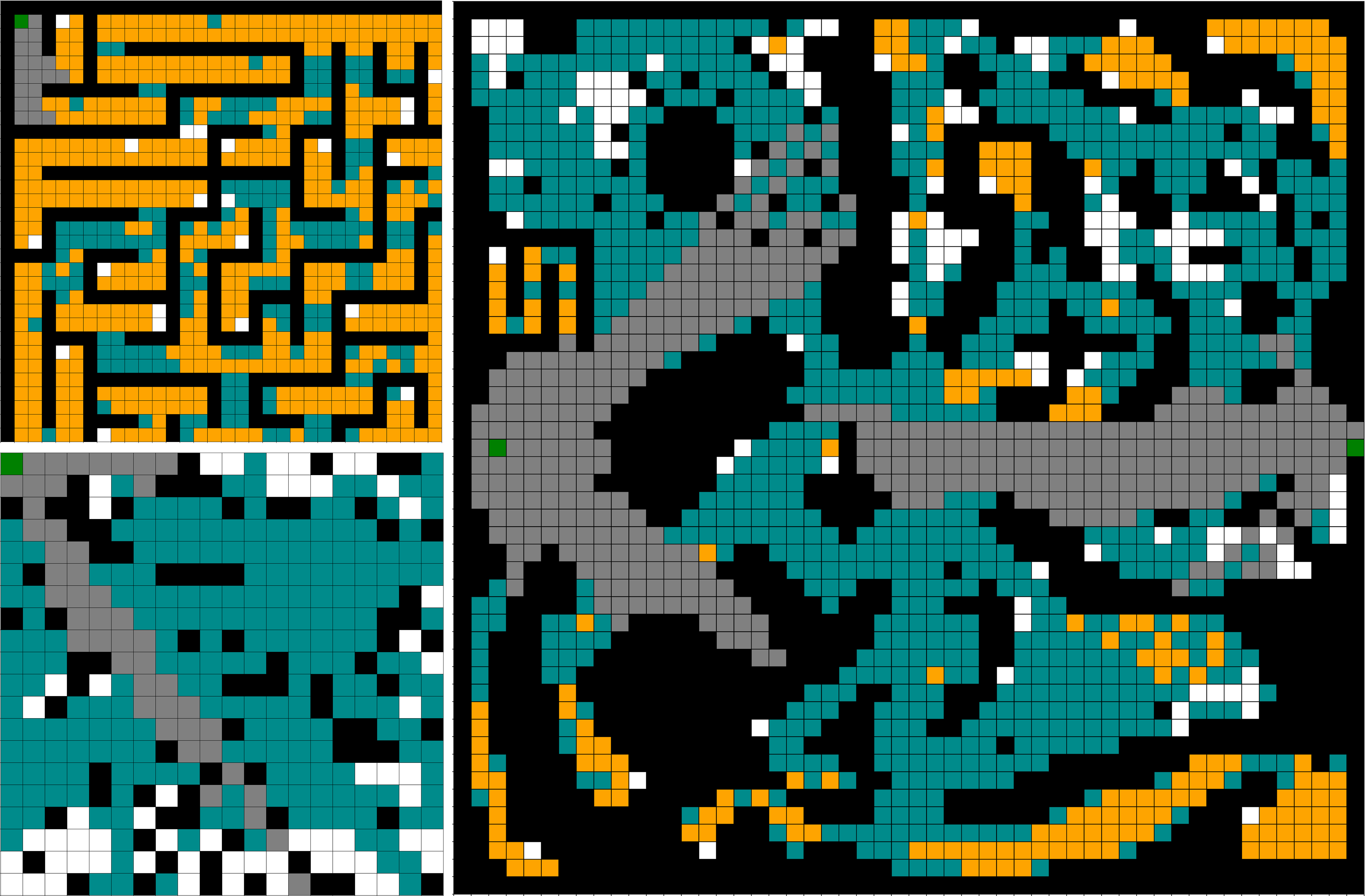}

    \caption{Comparison of state space reduction showing cells removed by \textit{cell dominance} (orange only) and \textit{path dominance} (orange + blue) on a Maze map (top left) \cite{stern2019mapf}, a randomized map (bottom left), and a Minecraft-inspired map (right) \cite{chong2024optimal}.}
    \label{fig:cell_path_domination_comparison}
\end{figure}

We also note that the set of cells pruned by CD is a subset of those pruned by PD. If $s_i$ dominates $s_j$ in CD, then there exists no cell from which we can see $s_i$ without also seeing $s_j$, thus $s_i$ dominates $s_j$ in PD as well. Figure~\ref{fig:cell_path_domination_comparison} visualizes the cells dominated by CD and PD. Our experiments show that CD generally runs faster than PD due to time-intensive BFS computation in PD. Thus, we first run CD followed by PD, reducing the cells that PD needs to loop through. We call this method cell and path dominance (CPD).

\subsection{Pivot Pruning}

\begin{figure}[t]
    \centering

    \begin{subfigure}{0.38\linewidth}
        \centering
        \includegraphics[width=\linewidth]{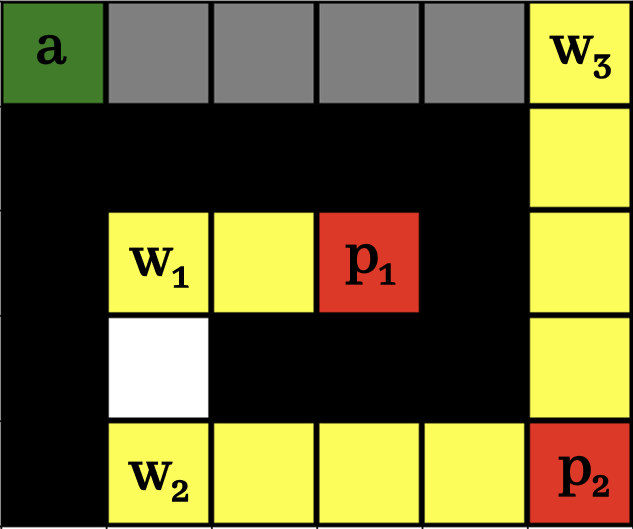}
        \caption{}
    \end{subfigure}
    \hfill
    \begin{subfigure}{0.2\linewidth}
        \centering
        \includegraphics[width=\linewidth]{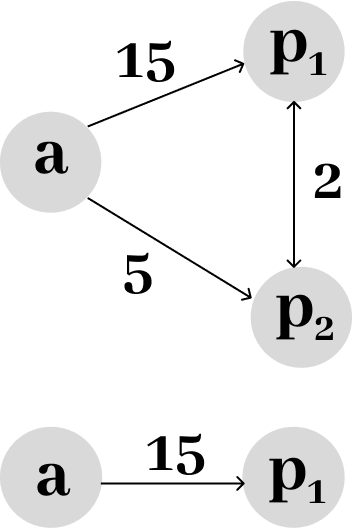}
        \caption{}
    \end{subfigure}
    \hfill
    \begin{subfigure}{0.38\linewidth}
        \centering
        \includegraphics[width=\linewidth]{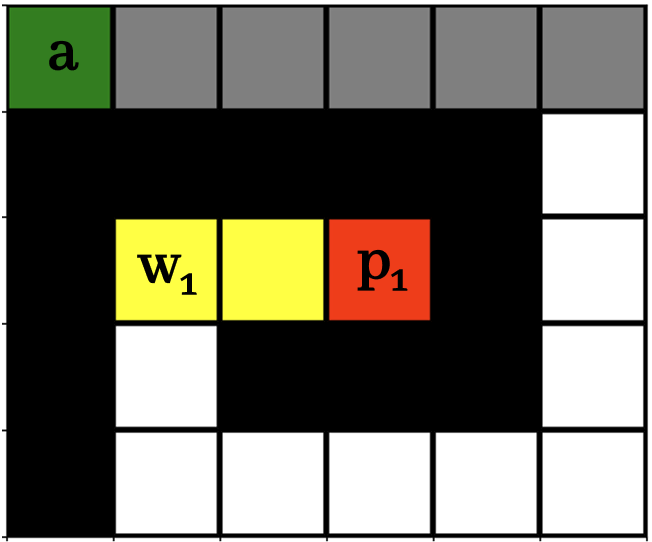}
        \caption{}
    \end{subfigure}

    \begin{subfigure}{0.38\linewidth}
        \centering
        \includegraphics[width=\linewidth]{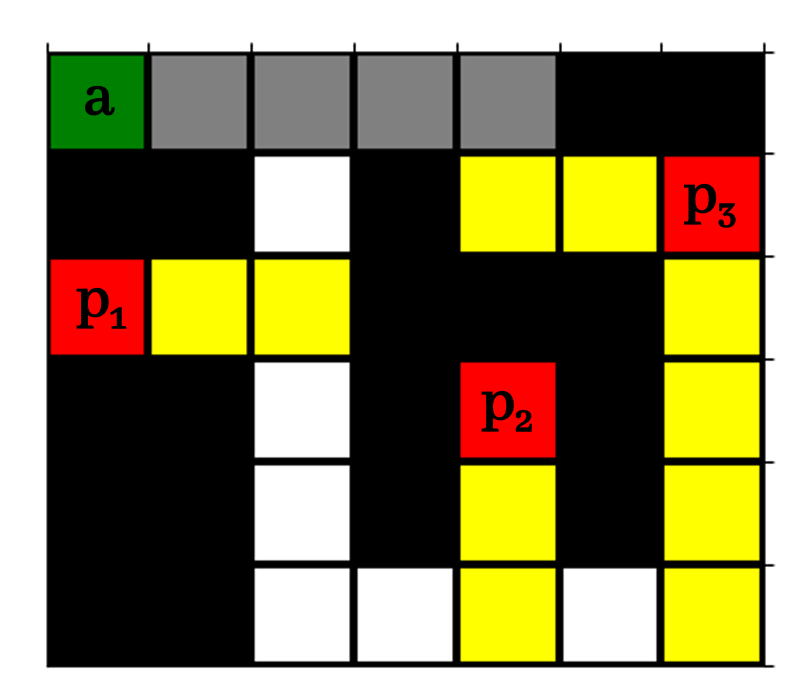}
        \caption{}
    \end{subfigure}
    \hfill
    \begin{subfigure}{0.2\linewidth}
        \centering
        \includegraphics[width=\linewidth]{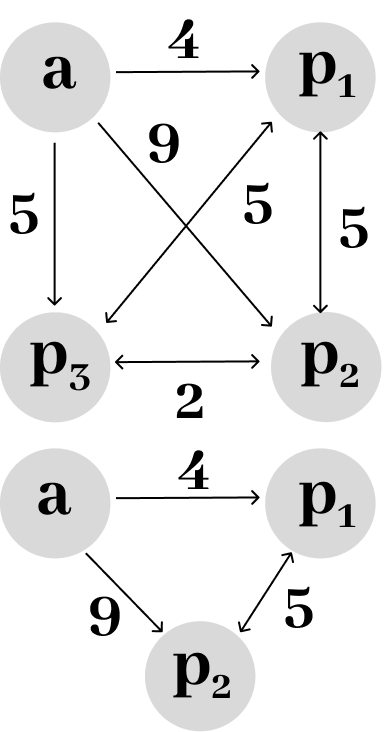}
        \caption{}
    \end{subfigure}
    \hfill
    \begin{subfigure}{0.38\linewidth}
        \centering
        \includegraphics[width=\linewidth]{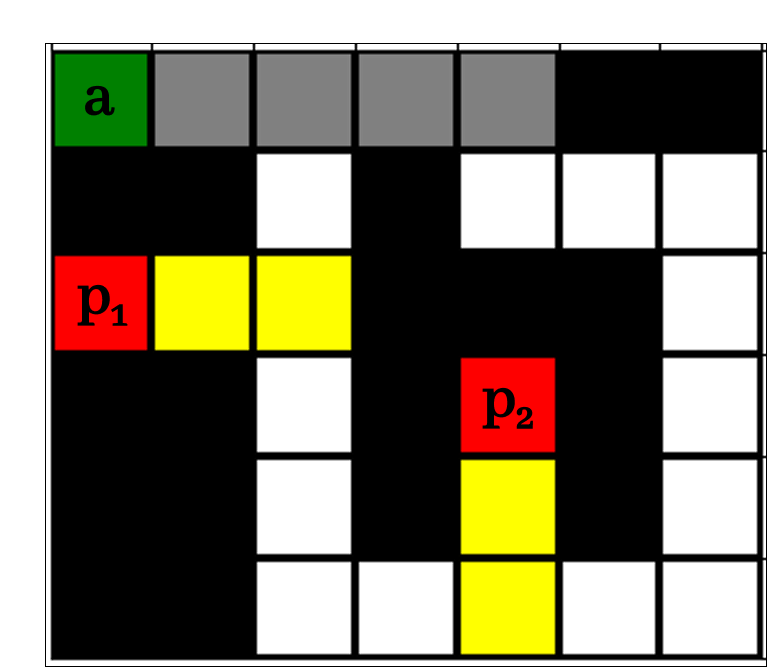}
        \caption{}
    \end{subfigure}

    \caption{Examples where pivot pruning improves (top) and worsens (bottom) the heuristic. In Figures(b) and (e), the top graph corresponds to the left Figures(a/d), and the bottom graph corresponds to right Figures(c/f).}
    \label{fig:pivot_pruning}
\end{figure}

Next, we describe an algorithmic enhancement to MWRP-\astar. When computing the mTSP heuristic, we know that any set of pivots, where no two pivots share watchers, results in an admissible mTSP heuristic. MWRP-\astar greedily uses as many pivots as possible. However, consider the example shown in Figure~\ref{fig:pivot_pruning}(a). From agent $a$, the closest watcher of $p_1$ is $w_1$, 15 cells away, and the closest watcher of $p_2$ is $w_3$ 5 cells away. Additionally, the closest watchers between $p_1$ and $p_2$, $w_1$ and $w_2$, are 2 cells apart. This results in the graph $G_\mathit{DLS}$ shown in Figure~\ref{fig:pivot_pruning}(b) (top graph). The mTSP heuristic solved on $G_\mathit{DLS}$ results in the path $a \rightarrow p_2 \rightarrow p_1$, with a heuristic value of $5 + 2 = 7$. However, if we remove $p_2$ (Figure~\ref{fig:pivot_pruning}(c)), then the only edge is $a \rightarrow p_1$ (Figure~\ref{fig:pivot_pruning}(b), bottom graph), and solving mTSP gives a heuristic value of $15$. We call this behavior \textit{shortcut removal}, since $p_2$ provides a shortcut between $a$ and $p_1$. To maximize the admissible heuristic value, we greedily remove pivots that may provide shortcuts in the mTSP solution.

The shortcut that pivot $p_i$ provides between agent $a_k$ and pivot $p_j$ is $s(p_i, p_j, a_k) = e(a_k, p_j) - (e(a_k, p_i) + e(p_i, p_j))$, where $e$ is the distance between two vertices in $G_\mathit{DLS}$. We use a while loop to repeatedly prune out the pivot that provides the largest shortcut between agents and other pivots until no more pivots provide a positive shortcut. Pivot pruning has a runtime complexity of $O(MP^3)$ where $P$ is the number of pivots. Pivot pruning is run for every mTSP heuristic calculation prior to running the mTSP solver. Pivot pruning is not guaranteed to increase the heuristic value, and in some cases may even reduce the informativeness of the heuristic. An example is shown in Figure~\ref{fig:pivot_pruning}(d-f), where removing pivot $p_3$, which provides a shortcut from $a$ to $p_2$, worsens the overall mTSP heuristic from $11$ to $9$. However, our experiments show that it provides a significant speedup.

\subsection{Parallel Heuristic Calculation}

Despite using lazy heuristic evaluation for \astar expansions, the majority of the time overhead in MWRP-\astar comes from the mTSP heuristic calculation. Parallelization algorithms for lazy heuristics have been explored for large-scale multiprocessing \cite{mukherjee2022mplp}, however these methods require constantly running asynchronous processes. Instead, we use Batch \astar \cite{li2022optimal, agostinelli2019solving}, which is often used to parallelize the calculation of neural network heuristics. During each expansion cycle of \astar, we first check if the next node to be popped has had its mTSP heuristic computed. If not, we parallelize the computation by looping through the next $N$ nodes that are going to be popped from \textit{OPEN} and accumulating the nodes that have not yet had their mTSP heuristic computed into a \textit{batch}. We then compute the mTSP heuristic for the batch in parallel and update each node's $f$-value in \textit{OPEN}.

Since $N$ is a constant, the runtime complexity of \astar is not changed. Larger $N$ values will parallelize more heuristic calculations, but may execute more unnecessary computations, since there is no guarantee that all of these nodes would have been expanded. We use $N = 100$ in our experiments.

As proven above, \algname is guaranteed to return an optimal solution. Notably, while our focus is the makespan objective, \algname makes no assumptions on the objective function and is thus optimal for graph-based WRP and MWRP with either the sum-of-costs or makespan objective.

\section{Bounded Suboptimal Methods}
\label{sec:suboptimal}
Bounded Suboptimal Search (BSS) algorithms are a class of search methods that, given a weight $w \ge 1$, guarantee a solution with cost $C \leq C^* \cdot w$, where $C^*$ is the cost of the optimal solution. We present two BSS algorithms that are used to scale \algname to larger maps and agent counts.

\subsection{Minimax Weighted \astar (\subalgname)}

A popular BSS variant of \astar is weighted \astar (W\astar) \cite{pohl1970heuristic}. In W\astar, the $f$-value is altered to $f_w(n) = g(n) + w \cdot h(n)$, where $g(n)$ is the cost to reach node $n$ and $h(n)$ is an admissible heuristic. W\astar has previously been applied to the graph-based WRP problem with WRP-\astar \cite{yaffe2021suboptimally}. To our knowledge, W\astar has not been applied to makespan problems with joint-space search.

\captionsetup[subfigure]{skip=5pt}   

\begin{figure}[t]
    \centering

    \includegraphics[width=\linewidth]{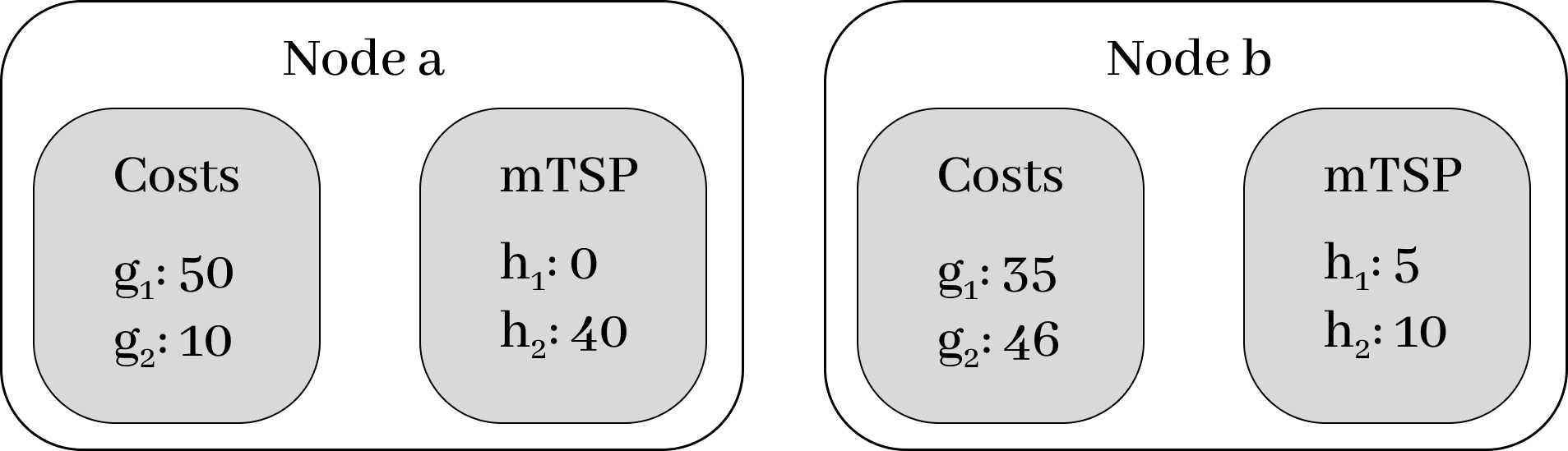}

    \caption{Two example search nodes, each with the agent costs ($g_k$) and estimates on the remaining search efforts ($h_k$).}
    \label{fig:mwastar_example}
\end{figure}

We propose \textit{Minimax Weighted \astar} (\subalgname), a W\astar variant specifically for the makespan objective. The goal is to prioritize agents with lower search effort while ensuring that the solution found is still bounded suboptimal. \subalgname first obtains a weighted $f_k$-value for each agent $a_k$. Then, the maximum of these $f_k$-values is used as the $w$-admissible $f$-value during the search. That is, \subalgname performs the \astar search with the f-value $f_{MxW}(n) = \max\limits_{k}\{g_k + w \cdot h_k\}$.

For example, consider nodes $a$ and $b$ shown in Figure~\ref{fig:mwastar_example}. Node $a$ has $g(a) = \max\limits\{g_1, g_2\} = 50$ and admissible $f(a) = \max\limits\{g_1 + h_1, g_2 + h_2\} = 50$. Similarly, node $b$ has $g(b) = \max\limits\{g_1, g_2\} = 45$ and admissible $f(b) = \max\limits\{g_1 + h_1, g_2 + h_2\} = 55$. In \astar, node $a$ is expanded first since it has a lower $f$-value.

However, using \subalgname, we generate $f_\mathit{MxW}(a) = \max\{50 + w \cdot 0, 10 + w \cdot 40\} = 10 + w \cdot 40$ and $f_\mathit{MxW}(b) = \max\{35 + w \cdot 5, 45 + w \cdot 10\} = 45 + w \cdot 10$. Thus, with $w \geq 1.2$, node $b$ would be expanded first, which is preferred since it has lower $h$-values and thus has a lower remaining search effort. We now prove that \subalgname is $w$-suboptimal.

\begin{theorem}
Given agent costs $g_k(n)$ and estimates of agent cost-to-go of $h_k(n)$, performing \astar search with $f_\mathit{MxW}(n) = \max\limits_i\{g_k() + w \cdot h_k(n)\}$ returns a solution with cost $C \leq w \cdot C^*$, where $C^*$ is the cost of the optimal solution.
\end{theorem}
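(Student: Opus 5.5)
The plan is to mirror the standard proof that weighted \astar is $w$-suboptimal, adapting it to the max-of-agents $f$-value. Throughout, I assume the per-agent estimates are admissible in the joint sense. That is, for any node $n$ on the path to an optimal goal, the optimal completion from $n$ gives each agent $a_k$ a total cost of at least $g_k(n)+h_k(n)$. Consequently $\max_k\{g_k(n)+h_k(n)\}\le C^*$ for such $n$. This is exactly the property that makes the unweighted $f(n)=\max_k\{g_k+h_k\}$ used earlier admissible. I also use that at a goal node $h_k=0$ for all agents, so $f_\mathit{MxW}$ of a goal equals its makespan $\max_k g_k$.

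\textbf{Step 1 (weighted bound on a single node).} For any node $n$ and any $w\ge1$, since $g_k\ge 0$ and $h_k\ge0$,
\[
g_k(n)+w\cdot h_k(n)\;\le\; w\cdot g_k(n)+w\cdot h_k(n)\;=\;w\,(g_k(n)+h_k(n)).
\]
Taking the maximum over $k$ gives $f_\mathit{MxW}(n)\le w\cdot\max_k\{g_k(n)+h_k(n)\}$. This step is where the max structure matters, and it goes through because the inequality holds termwise for every agent.

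\textbf{Step 2 (an optimal-path node is always on OPEN).} I use the usual \astar invariant. Until a goal is expanded, OPEN contains some node $n'$ lying on an optimal solution's search path whose $g_k$ values are the optimal-path costs. The argument is by induction on expansions. The root starts on an optimal path. Expanding the deepest such node generates its successor along the optimal path. That successor can only be pruned by the node-dominance rule if a dominating node, which is equally good for completion, is present instead. Dominance preserves membership on some optimal path, because a dominating node has a smaller residual set and smaller costs at the same locations, so it can follow the same remaining moves. By admissibility and Step 1, $f_\mathit{MxW}(n')\le w\,C^*$.

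\textbf{Step 3 (conclusion).} When the returned goal $n_g$ is selected for expansion, it has the minimum $f_\mathit{MxW}$ on OPEN. Therefore
\[
C=\max_k g_k(n_g)=f_\mathit{MxW}(n_g)\le f_\mathit{MxW}(n')\le w\,C^*.
\]

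I expect the main obstacle to be Step 2. This is both because of dominance pruning and because the paper applies pathmax or lazy re-evaluation to the heuristic. If $f$ is raised via pathmax to the parent's value, $f_\mathit{MxW}(n')$ could exceed $w\cdot\max_k(g_k+h_k)$ at $n'$ itself. I would handle this by noting that every ancestor of $n'$ is also on the optimal path, so each ancestor's value is also bounded by $w\,C^*$. Hence the maxed value remains at most $w\,C^*$.
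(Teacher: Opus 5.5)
Your proof is correct and matches the paper's argument: bound $g_k+w\,h_k\le w(g_k+h_k)$ termwise, take the max to get $f_\mathit{MxW}(n)\le w\,f(n)\le w\,C^*$ for the deepest optimal-path node on OPEN, and compare with the expanded goal, where $h_k=0$. Your extra handling of dominance pruning and pathmax goes beyond what the paper addresses. Two small caveats: you only ever use the aggregate admissibility $\max_k\{g_k+h_k\}\le C^*$, which is what the paper assumes (the individual mTSP $h_k$ need not be per-agent lower bounds, so drop the stronger per-agent assumption); and under pathmax the goal satisfies $C\le f_\mathit{MxW}(n_g)$ rather than equality, which still suffices.
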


\begin{proof}
We define the last node expanded during the search as node $G$ with solution cost $C$. Let $P^*$ be the optimal search path with cost $C^*$, and let $n$ be the last node on $P^*$ that is in \textit{OPEN}. Since $G$ was expanded before $n$, we know that $f_\mathit{MxW}(G) \leq f_\mathit{MxW}(n)$, and since $h_k(G) = 0$ for each agent $a_k$, $f_\mathit{MxW}(G) = g(G)$, so $g(G) \leq f_\mathit{MxW}(n) = \max\limits_k\{g_k(n) + w \cdot h_k(n)\} \leq \max\limits_k\{w \cdot g_k(n) + w \cdot h_k(n)\} = w \cdot f(n)$. This means $g(G) \leq w \cdot f(n)$. Since $f(n)$ is admissible and $n$ is on $P^*$, we know $f(n) \leq C^*$, meaning $g(G) = C \leq w \cdot C^*$.
\end{proof}

In each node, $g_k$ is the cumulative cost of agent $a_k$. For the Singleton heuristic, $h_{s,k}$ is the distance from agent $a_k$ to the nearest watcher of $s$. Using \subalgname, $f_{s,MxW} = \min\limits_k\{g_k + w \cdot h_{s,k}\}$ is a $w$-admissible $f$-value, so $f_{MxW} = \max\limits_s\{f_{s,MxW}\}$. For the mTSP $f$-value, $h_k$ is agent $a_k$'s path length along $G_\mathit{DLS}$, so we compute $f_\mathit{MxW}$ by minimizing $\max\limits_k\{g_k + w \cdot h_k\}$ as the mTSP objective function.

In the single-agent case, \subalgname is identical to W\astar. Additionally, following techniques from Anytime Weighted \astar \cite{hansen2007anytime}, we extend \subalgname to Anytime \subalgname (\anytimesubalgname). \anytimesubalgname continues to search for better solutions even after an initial solution was found. If $B$ is the cost of the best solution found by the search thus far, then \anytimesubalgname prunes nodes $n$ with $\max\{\frac{f_\mathit{MxW}(n)}{w}, g(n)\} \geq B$, since $\frac{f_\mathit{MxW}(n)}{w}$ is guaranteed to be admissible. This forces future solutions to improve on the current best solution.

\subsection{Focal Search}

The second BSS method utilizes \textit{Focal Search} (FS), which uses both an admissible heuristic $h(n)$ and a second, not necessarily admissible, heuristic $h_\mathit{FOCAL}(n)$ that serves as a better estimate of search effort \cite{pearl1982studies}. FS maintains the \textit{OPEN} list from \astar as well as a separate \textit{FOCAL} list. \textit{FOCAL} contains the subset of nodes $n$ in \textit{OPEN} with $f(n) \leq w \cdot f_{min}$, where $f_{min}$ is the current minimum $f$-value in \textit{OPEN}. Each iteration, the node with the lowest $h_\mathit{FOCAL}(n)$ value from \textit{FOCAL} is expanded.

Since we use lazy heuristic evaluation in \textit{OPEN}, we need to recompute the heuristic of some nodes before inserting them into \textit{FOCAL}. Each \astar expansion cycle, all nodes $n$ in \textit{OPEN} that were inserted with the Singleton $f$-value $f(n) \leq w \cdot f_{min}$ are popped and reinserted into \textit{OPEN} with the mTSP $f$-value. \textit{FOCAL} contains the subset of nodes $n$ in \textit{OPEN} with an mTSP $f$-value of $f(n) \leq w \cdot f_{min}$. We propose two different $h_\mathit{FOCAL}$ heuristics, both of which estimate the remaining search effort based on the paths produced by mTSP. The \textit{Sum Of Remaining Costs} (SORC) heuristic calculates $\sum\limits_k h_k$, and the \textit{Max Of Remaining Costs} (MORC) heuristic calculates $\max\limits_k h_k$.

We also utilize these heuristics with Anytime Focal Search (AFS) \cite{cohen2018anytime}. Defining $B$ as the cost of the best solution found thus far, AFS limits the nodes in \textit{FOCAL} to nodes with $f(n) < \min\{B, w \cdot f_{min}\}$ to force future solutions to improve on the current best solution.

\section{Postprocessing Framework}
\label{sec:postprocessing}
Postprocessing is utilized in planning algorithms to quickly improve the solution quality of an existing suboptimal solution. We propose a postprocessing framework that decomposes the problem into several single-agent searches that are each run on a separate subset of the map. Naturally, to improve the makespan of an existing suboptimal solution, $\paths$, we look to improve the path of the agent with the highest path cost while ensuring that the resulting set of paths is still a solution. Defining agent $a_{max}$ as the agent with the largest path cost, we want to replace $\pi_a$, the path of $a_{max}$, with $\pi_a'$ where (1) $c(\pi_a') < c(\pi_a)$ and (2) the resulting set of paths still has line-of-sight to every cell in $\unseen$.

To ensure that the new set of paths has line-of-sight to every cell in $\unseen$, we need to ensure that $\pi_a'$ sees all cells in $\unseen$ that no other agent sees. Thus, defining $\paths^-$ as the paths of all agents other than $a_{max}$, we extract the \textit{minimum responsibility} ($r$) of $a_{max}$ as $r = \unseen \setminus \bigcup\limits_{s \in \pathunion(\paths^-)} \los(s)$. Here, $r$ represents the cells that only $a_{max}$ sees in the original solution, and thus must see in any improved solution. We utilize \algname to optimally solve a single-agent subproblem for agent $a_{max}$ with $\unseen = r$ (Figure~\ref{fig:postprocessing}). Formally, we solve the problem tuple $(1, \freecells, r, \{\starts_{a_{max}}\}, \neighbors, \los, \objective)$.
\begin{figure}[t]
    \centering

    \includegraphics[width=\linewidth]{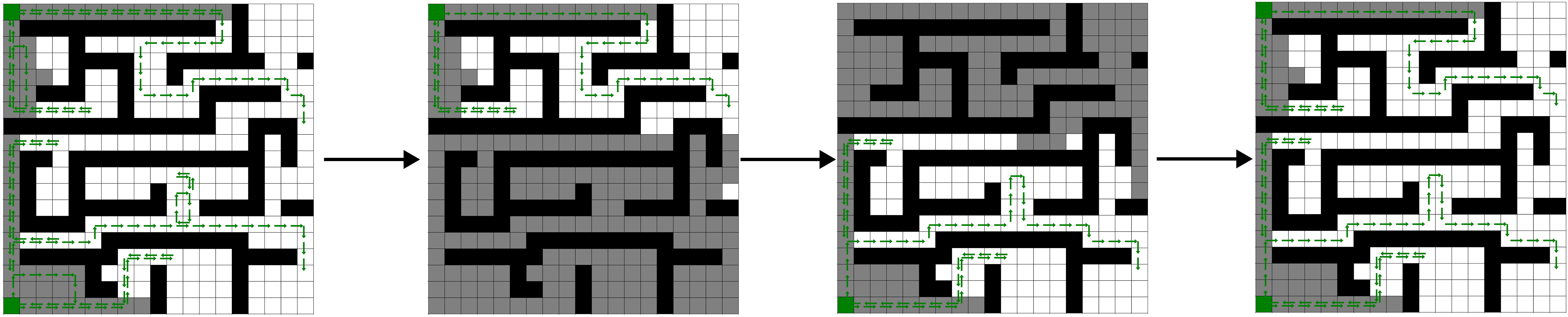}

    \caption{Visualization of the initial joint-space solution, decomposed agent sub-problems, and postprocessed solution.}
    \label{fig:postprocessing}
\end{figure}

Once a new path $\pi_a'$ is computed, we update $\pi_a = \pi_a'$. We repeat this process until the agent with the largest cost among $\paths$ has already had its path optimized in this decomposed manner (Figure~\ref{fig:postprocessing}). Since the agent responsibilities ($r$) may not be optimally partitioned, the postprocessing framework cannot guarantee a globally optimal solution. 

\section{Experimental Results}
\label{sec:experiments}
\begin{figure}[t]
    \centering

    \includegraphics[width=\linewidth]{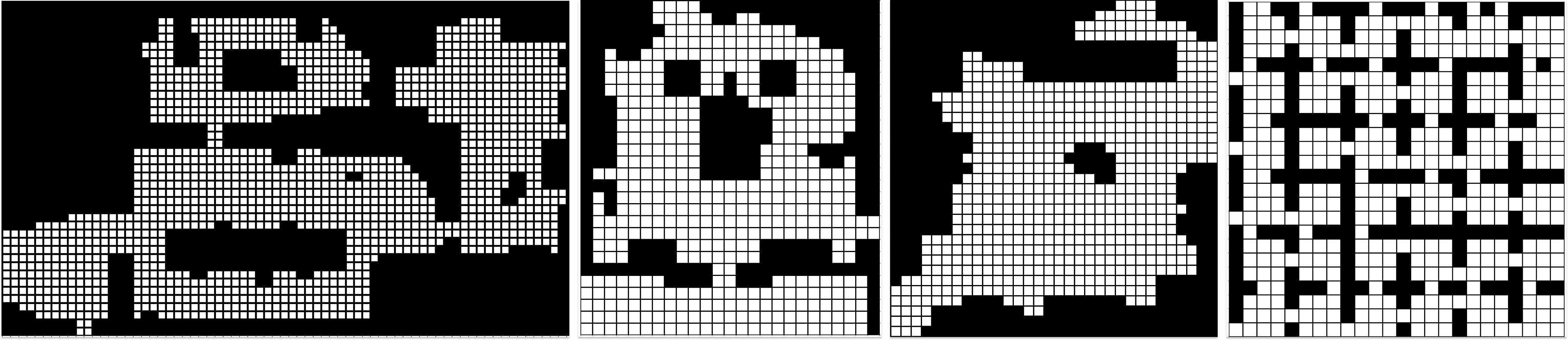}

    \caption{From left to right, Den101d, Lak105d, Den202d \cite{stern2019mapf}, and an example Room map.}
    \label{fig:experiment_maps}
\end{figure}

We conduct experiments with varying map types, map sizes, and agent counts to measure the performance of \algname, our BSS algorithms, and our postprocessing framework. For all of our experiments, we use four-way movement as the neighbor function and BresenhamLOS as the LOS function (Figure~\ref{fig:example_problem}). All algorithms, including the original MWRP-\astar algorithm, are implemented in C++. The mTSP heuristic is implemented as an integer linear programming formulation \cite{bektas2006multiple} using CPLEX \cite{cplex}. All experiments were run on a Legion 5i Gen 10 with 32 Intel Core Ultra 255Hx processors (4.50 GHz) and 32GB of RAM. We ran experiments on four different map styles (Figures ~\ref{fig:cell_path_domination_comparison} and ~\ref{fig:experiment_maps}): (1) Maze maps, (2) Room maps consisting of $3\times3$ rooms, (3) Random maps with 20\% obstacles, and (4) Game-inspired maps (Figures~\ref{fig:cell_path_domination_comparison} and \ref{fig:experiment_maps}). Room and random maps were custom-generated; the others were taken from the Moving AI MAPF Benchmark \cite{stern2019mapf}, previous work on MWRP \cite{livne2023optimally}, and previous work on multi-agent planning and task assignment \cite{chong2024optimal}. For all experiments, to simulate real-world scenarios where exploration requires entering an environment from the outside, we generate agent starting points by selecting random cells along the map border.

\subsection{State Space Reduction}

First, we compare the state space reduction and execution runtime of CD, PD, and CPD on three distinct map architectures (visualized in Figure~\ref{fig:cell_path_domination_comparison}). Averages were computed over 50 runs per agent count, from 1-5 agents, on each map. As shown in Table~\ref{tab:dominance_comparison}, PD and CPD were able to reduce $|\unseen|$ by more than 95\% on some maps. CD and PD both pruned more cells in structured maps with long corridors, which are less prominent in \textit{Random} maps and more prominent in \textit{Maze} maps. CPD executed 1.25$\times$ faster than PD.

\begin{table}[t]
    \centering
    \resizebox{\linewidth}{!}{\renewcommand{\arraystretch}{1.0} \begin{tabular}{ |>{\centering\arraybackslash}m{0.7cm}|p{2.0cm}|p{1.7cm}|p{1.7cm}|p{1.7cm}|  }
     \hline
     \multicolumn{2}{|c|}{\textbf{Map}}  & \centering Random & \centering Maze & \multicolumn{1}{|c|}{Minecraft} \\
     \multicolumn{2}{|c|}{\textbf{Average initial size of} $\mathbf{|\unseen|}$} & \centering $|\unseen| = $ 175.3 & \centering $|\unseen| = $ 598.9 & \multicolumn{1}{|c|}{$|\unseen| = $ 1279.5} \\

     \hline
    \multirow{3}{*}{\raisebox{\height}{CD}} & \textbf{\% Reduction} & \raggedleft\arraybackslash 0.8 $\pm$ 0.6 & \raggedleft\arraybackslash 69.9 $\pm$ 1.2 & \raggedleft\arraybackslash 18.5 $\pm$ 2.3 \\
                          & \textbf{Runtime} & \raggedleft\arraybackslash 2 $\pm$ 0 & \raggedleft\arraybackslash 4 $\pm$ 0 & \raggedleft\arraybackslash 45 $\pm$ 1 \\
     \hline
    \multirow{3}{*}{\raisebox{\height}{PD}} & \textbf{\% Reduction} & \raggedleft\arraybackslash \textbf{56.6 $\pm$ 19.8} & \raggedleft\arraybackslash \textbf{95.3 $\pm$ 1.1} & \raggedleft\arraybackslash \textbf{89.0 $\pm$ 3.8} \\
                          & \textbf{Runtime} & \raggedleft\arraybackslash 8 $\pm$ 0 & \raggedleft\arraybackslash 28 $\pm$ 1 & \raggedleft\arraybackslash 205 $\pm$ 3  \\
     \hline
    \multirow{3}{*}{\raisebox{\height}{CPD}} & \textbf{\% Reduction} & \raggedleft\arraybackslash \textbf{56.6 $\pm$ 19.8} & \raggedleft\arraybackslash \textbf{95.3 $\pm$ 1.1} & \raggedleft\arraybackslash \textbf{89.0 $\pm$ 3.8} \\
                          & \textbf{Runtime} & \raggedleft\arraybackslash \textbf{6 $\pm$ 1} & \raggedleft\arraybackslash \textbf{8 $\pm$ 0} & \raggedleft\arraybackslash \textbf{160 $\pm$ 2} \\
     \hline
    \end{tabular}}
    \caption{Comparison of CD, PD, and CPD on three distinct maps (shown in Figure~\ref{fig:cell_path_domination_comparison}). The \% reduction in the size of $\unseen$, as well as the CPU runtime (ms), are measured. The initial $\unseen$ is $C$ minus the initial LOS of each agent.}
    \label{tab:dominance_comparison}
\end{table}

\subsection{Search Algorithm Runtime Comparison}

\begin{figure*}[t]  
    \centering
    \includegraphics[width=0.95\textwidth]{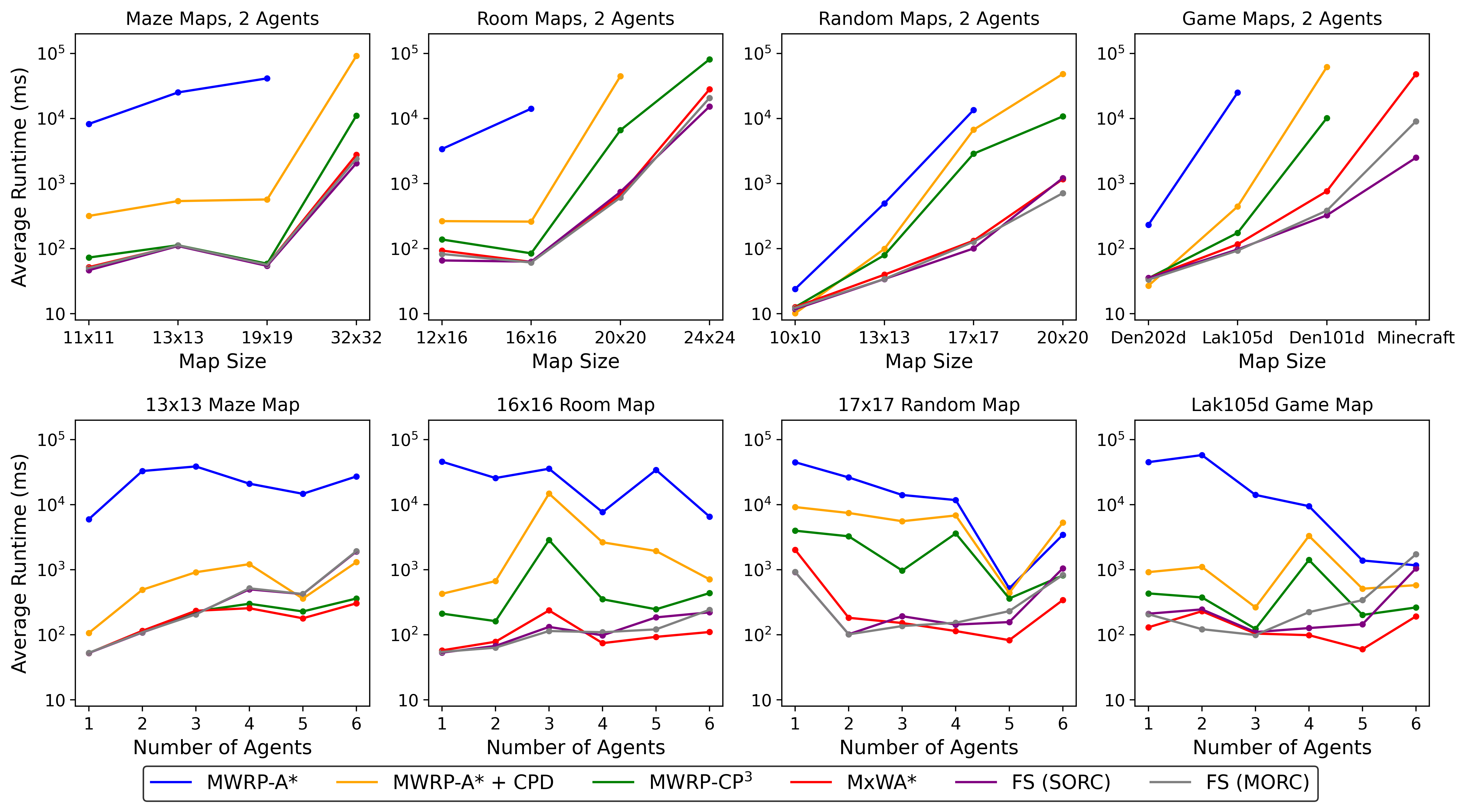}
    \caption{Runtime comparison of search algorithms with respect to map size (top) and agent counts (bottom). Suboptimal methods used $w = 2$. Algorithms were subject to a 200s time limit, and only those solving 80\%+ of the instances are shown.}
    \label{fig:main_experiments}
\end{figure*}

To our knowledge, MWRP-\astar is the only existing optimal algorithm for solving MWRP, and there are no bounded suboptimal algorithms. We compare runtime performance between MWRP-\astar, MWRP-\astar with CPD, \algname, \subalgname, FS with SORC, and FS with MORC. The runtime measurements are averaged over 20 instances, each with randomly generated agent starting locations. CPD runtime was included in the total runtime. As shown in Figure~\ref{fig:main_experiments}, \algname is the fastest optimal solver, running 200$\times$ faster than MWRP-\astar, and solving complex problem instances that MWRP-\astar failed to solve within a 200s time limit. The runtime improvement is significantly lower on \textit{Random} maps, as CPD is less effective (Table~\ref{tab:dominance_comparison}). Figure~\ref{fig:main_experiments} also demonstrates the scalability of our suboptimal solvers with respect to increasing map sizes and agent counts. For larger agent counts, \subalgname performs better than both FS algorithms, likely because the size of \textit{FOCAL} increases with a larger joint-space, resulting in more mTSP heuristic calculations.

\subsection{Ablation Study}

\begin{table}[t]
    \centering
    \resizebox{\linewidth}{!}{\begin{tabular}{ | >{\centering\arraybackslash}m{0.6cm}
                     | >{\centering\arraybackslash}m{1.7cm}
                     | >{\centering\arraybackslash}m{1.7cm}
                     | >{\centering\arraybackslash}m{1.7cm}
                     | >{\centering\arraybackslash}m{1.7cm}|  }
     \hline

     & \multicolumn{2}{|c|}{\textbf{32$\mathbf{\times}$32 Maze}} & \multicolumn{2}{|c|}{\textbf{Den101d}} \\
\cline{2-5}     
                          & $M=1$ & $M=3$ & $M=1$ & $M=3$ \\     
     \hline
    PP & 3.21 $\pm$ 1.26 & 8.18 $\pm$ 6.35 & 1.38 $\pm$ 0.19 & 4.99 $\pm$ 2.15 \\
     \hline
    PHC & 2.03 $\pm$ 0.13 & 5.59 $\pm$ 2.00 & 1.76 $\pm$ 0.30 & 2.33 $\pm$ 0.83 \\
     \hline
    \end{tabular}}
    \caption{Ablation tests with $M$ agents measuring the scalar factor increase in runtime after removing either PP or PHC.}
    \label{tab:individual_improvement}
\end{table}

We also performed an ablation study to measure the individual improvement of the Pivot Pruning (PP) and Parallel Heuristic Calculation (PHC) techniques. Table~\ref{tab:individual_improvement} shows the slowdown caused by removing PP or PHC, each averaged over 20 different problem instances. Both methods provide a speedup in all test cases, and both provide a larger benefit on problems with more agents. PHC has low variances in its slowdown factor, indicating that its speedup is consistent regardless of the problem. PP, however, has high variances, meaning its speedup varies from problem to problem.

\subsection{Comparison of Suboptimal Variants}

\begin{figure}[t]
    \centering

    \begin{subfigure}{0.50\linewidth}
        \centering
        \includegraphics[width=\linewidth]{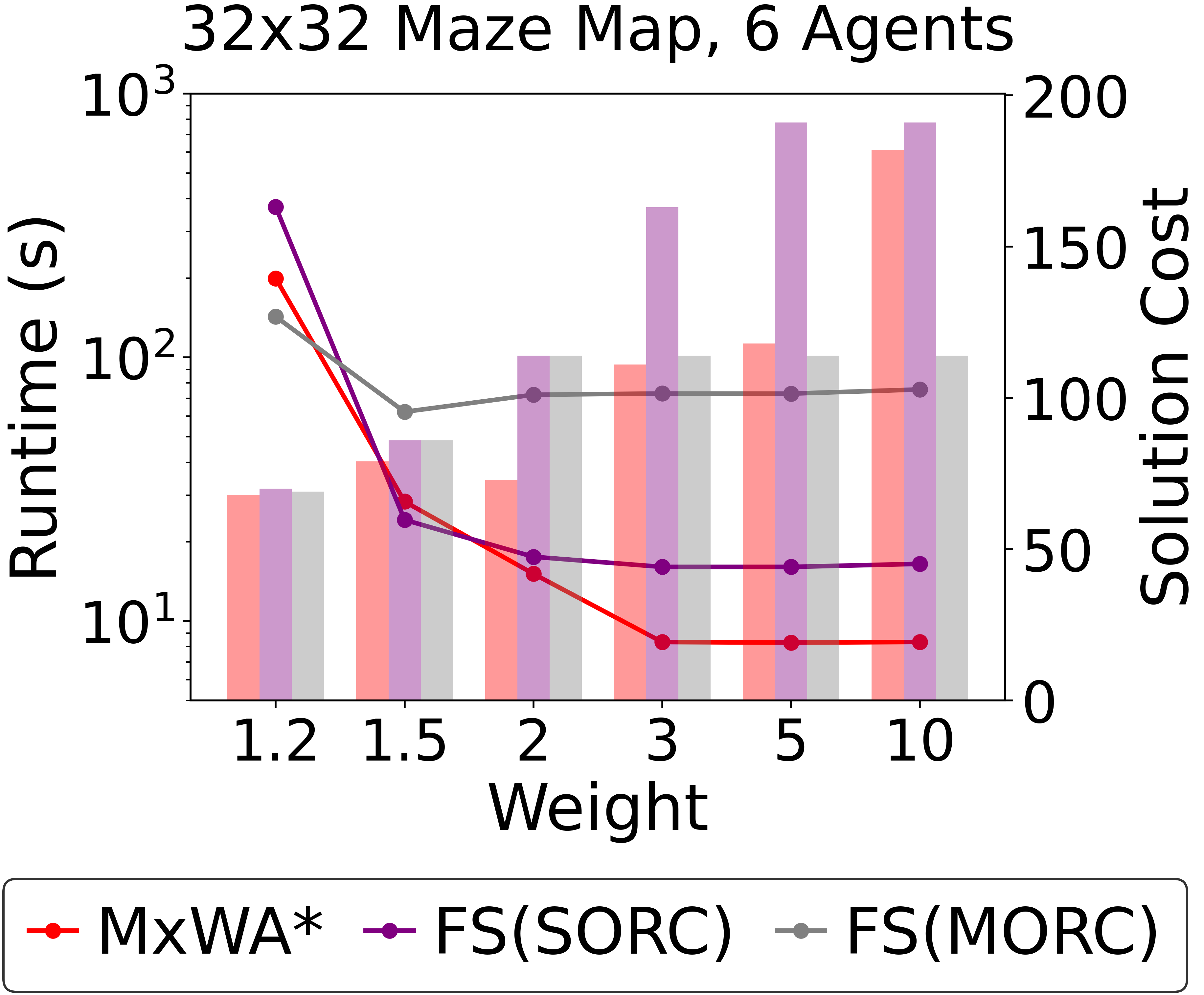}
        \caption{}
    \end{subfigure}
    \hfill
    \begin{subfigure}{0.46\linewidth}
        \centering
        \includegraphics[width=\linewidth]{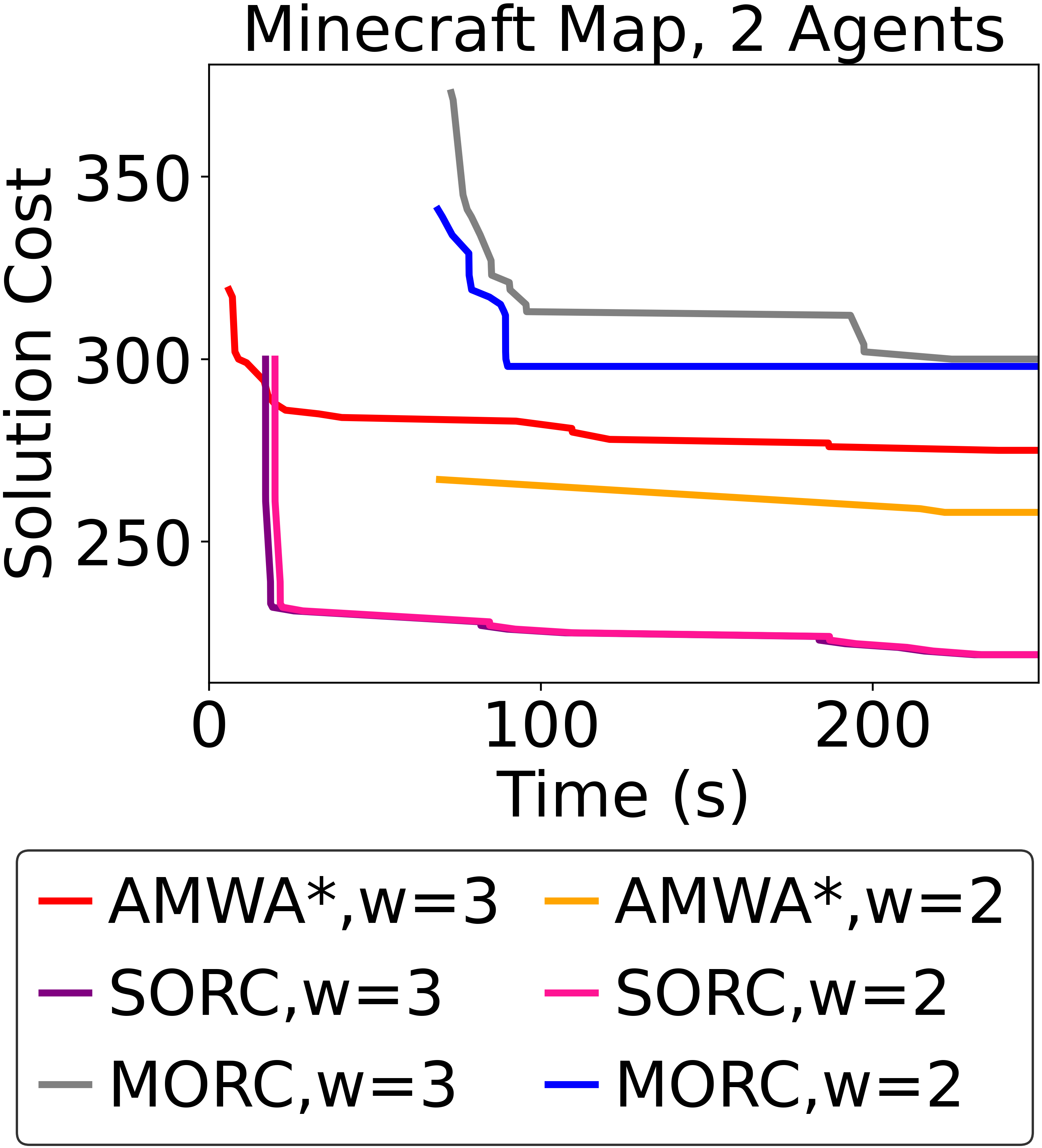}
        \caption{}
    \end{subfigure}

    \caption{Figure(a) shows runtime (lines) and solution cost (bars) at different $w$ values. Figure(b) shows the cost improvement over time of the anytime variations.}
    \label{fig: w_comparison_and_anytime_results}
\end{figure}

As shown in Figure~\ref{fig: w_comparison_and_anytime_results}(a), we measured the performance of all three suboptimal methods at different weight values by running the algorithms on the same problem instance, a 32$\times$32 Maze Map with 6 agents, at varying $w$ values. The runtime and solution cost were averaged over 10 iterations. All three algorithms have increasing solution costs increasing and decreasing runtimes with larger $w$ values. At lower $w$ values, all three algorithms perform similarly in terms of runtime and solution cost. At higher $w$ values, \subalgname is able to achieve faster runtimes than either FS.

We also empirically tested the anytime behavior of \anytimesubalgname, AFS with SORC heuristic, and AFS with MORC heuristic, as shown in Figure~\ref{fig: w_comparison_and_anytime_results}(b). All three algorithms were run on the Minecraft-inspired map with 2 agents. The algorithms were run only once, since the improvement over time cannot be averaged over several runs. We can see that all three algorithms improve their costs significantly over the course of the runtime. AFS with the SORC heuristic takes longer than \anytimesubalgname to generate an initial solution (for $w = 3$), but is able to quickly improve its solution much faster than \anytimesubalgname and AFS with the MORC heuristic. These results demonstrate that all three algorithms are effective depending on the specific scenario and $w$ value.

\subsection{Impact of Postprocessing Framework}

\begin{table}[t]
    \centering
    \resizebox{\linewidth}{!}{\begin{tabular}{ | >{\centering\arraybackslash}m{1.8cm}
                     | >{\centering\arraybackslash}m{0.5cm}
                     | >{\centering\arraybackslash}m{1.25cm}
                     | >{\centering\arraybackslash}m{1cm}
                     | >{\centering\arraybackslash}m{1.25cm}
                     | >{\centering\arraybackslash}m{1cm}|  }
     \hline

    \multirow{2}{*}{} & \multirow{2}{*}{\textbf{w}} & \multicolumn{2}{|c|}{\textbf{BSS Algorithm}} & \multicolumn{2}{|c|}{\textbf{Postprocessing}} \\
\cline{3-6}
                          &  & \textbf{Runtime} & \textbf{Cost} & \textbf{Runtime} & \textbf{Cost} \\     
     \hline
    \algname & 1 & 7,379 & 92 & 0 & 92 \\
     \hline
    \multirow{2}{=}{\subalgname} & 2 & 1,277 & \textbf{124} & 164 & \textbf{93} \\
                          & 5 & 1,154 & 129 & 146 & 109 \\
     \hline
    \multirow{2}{=}{FS (SORC)} & 2 & 1,121 & 162 & 168 & 125 \\
                          & 5 & 1,105 & 162 & 182 & 125 \\
     \hline
    \multirow{2}{=}{FS (MORC)} & 2 & \textbf{1,074} & 127 & \textbf{156} & 107 \\
                          & 5 & 1,091 & 127 & 166 & 107 \\
     \hline
    \end{tabular}}
    \caption{Impact of postprocessing architecture on BSS algorithms. The BSS runtime (ms), cost of the initial BSS solution, postprocessing runtime (ms), and improved solution cost are shown. We also show \algname as a reference.}
    \label{tab:bss_results}
\end{table}

The postprocessing framework was tested on a 32$\times$32 Maze with 3 agents. As shown in Table~\ref{tab:bss_results}, for each algorithm, the postprocessing framework increased overall runtime by only 8\% while improving the solution cost significantly, with \subalgname achieving a near-optimal postprocessed solution. In general, the postprocessing framework is best suited for \subalgname since its solution better distributes the paths throughout the map, which helps create even partitions.

\section{Conclusion}
\label{sec:discussion}
In this work, we introduced \algname, which optimally solves MWRP by utilizing cell and path dominance, pivot pruning, and parallel heuristic computation, and computes optimal paths 200$\times$ faster than existing baselines. We also introduced \subalgname, a variation of W\astar for the makespan objective function, and SORC and MORC, two heuristics used in conjunction with Focal Search. Additionally, we presented anytime variations for our BSS algorithms, as well as a postprocessing framework, both of which significantly improve the quality of the suboptimal solution. Future work should (1) investigate faster heuristics \cite{ren2024bounded}, (2) experiment on non-grid graphs \cite{prms}, and (3) expand the problem scope (i.e. considering inter-agent collisions or heterogeneous agents).


\section*{Acknowledgments}
\label{sec:acknowledgements}
The research at Carnegie Mellon University was partially supported by the National Science Foundation under grants \#2328671 and \#2441629.
Ariel Felner was supported by Israel Science Foundation (ISF) Grant \#909/23 and by a grant from the Israeli Ministry of Science and Technology (MOST).

\bibliography{refs}

\end{document}